	\newtheorem{lemma}{\textbf{Lemma}}
	\newtheorem{theorem}{\textbf{Theorem}}
	\newtheorem{remark}{\textbf{Remark}}
	\newtheorem{problem}{\textbf{Problem}}
\newcommand{\T}{^{\mbox{\tiny T}}}
\newcommand{\R}{\mathbb{R}}
\let\leq\leqslant
\let\geq\geqslant
\newenvironment{proof}[1][Proof]%
{\par\noindent\textit{#1:\ }}%
{\hspace*{\fill} \rule{6pt}{6pt}}
\newenvironment{proof*}[1][Proof]%
{\par\noindent\textit{#1:\ }}{}
\newenvironment{Protocol}[1][htb]
{% Update algorithm name
	\begin{algorithm}[#1]%
	}{\end{algorithm}}
\DeclareMathOperator{\trace}{tr}
\DeclareMathOperator{\re}{Re}
\newenvironment{system}[1]%
{\setlength{\arraycolsep}{0.5mm}\left\{ \; \begin{array}{#1}}%
	{\end{array} \right.}
\newenvironment{system*}[1]%
{\setlength{\arraycolsep}{0.5mm} \begin{array}{#1}}%
	{\end{array}}
\title{\LARGE \textbf{Scale-free $\mathbf{H_2}$ Almost State Synchronization for Homogeneous Networks of Non-Introspective Agents}}
\author{Zhenwei Liu, Ali Saberi, Anton A. Stoorvogel, and Donya Nojavanzadeh%
	\thanks{This work is supported by the Nature Science
		Foundation of Liaoning Province under Grant 2019-MS-116 and the Fundamental Research Funds for
		the Central Universities of China under Grant N2004014.}%
	\thanks{Zhenwei Liu is with College of Information Science and
		Engineering, Northeastern University, Shenyang 110819,
		P. R. China {\tt\small jzlzwsy@gmail.com}} \thanks{Ali Saberi is with
		School of Electrical Engineering and Computer Science, Washington
		State University, Pullman, WA 99164, USA {\tt\small
			saberi@eecs.wsu.edu}} \thanks{Anton A. Stoorvogel is with
		Department of Electrical Engineering, Mathematics and Computer
		Science, University of Twente, P.O. Box 217, Enschede, The
		Netherlands {\tt\small A.A.Stoorvogel@utwente.nl}} \thanks{Donya Nojavanzadeh is with
		School of Electrical Engineering and Computer Science, Washington
		State University, Pullman, WA 99164, USA {\tt\small
			donya.nojavanzadeh@wsu.edu}} }
\begin{document}
	
	\maketitle
	\thispagestyle{empty}
	\pagestyle{empty}

\begin{abstract}
	This paper studies scale-free protocol design for $H_2$ almost state
	synchronization of homogeneous networks of non-introspective agents
	in presence of external disturbances. The necessary and sufficient conditions are provided by designing collaborative linear 
	dynamic protocols. The design is based on localized
	information exchange over the same communication network, which does not need any knowledge of the
	directed network topology and the spectrum of the associated
	Laplacian matrix. Moreover, the proposed protocol is scalable and
	achieves $H_2$ almost synchronization with a given
	arbitrary degree of accuracy for any arbitrary number of agents.
\end{abstract}

%\begin{IEEEkeywords}
%	Homogeneous networks, $H_2$ almost state
%	synchronization, scale-free protocol, necessary and sufficient conditions
%\end{IEEEkeywords}

\section{Introduction}

In recent decades, the synchronization problem for multi-agent systems
(MAS) has attracted substantial attention due to the wide potential
for applications in several areas such as automotive vehicle control,
satellites/robots formation, sensor networks, and so on. See for
instance the books \cite{ren-book} and \cite{wu-book} or the survey
paper \cite{saber-murray3}.

State synchronization inherently requires homogeneous networks
(i.e. agents which have identical models). Therefore, in this paper we
focus on homogeneous networks. So far, most work has focused on state
synchronization based on diffusive full-state coupling, where the
agent dynamics progress from single- and double-integrator dynamics
(e.g.  \cite{saber-murray2}, \cite{ren}, \cite{ren-beard}) to more
general dynamics (e.g. \cite{scardovi-sepulchre}, \cite{tuna1},
\cite{wieland-kim-allgower}). State synchronization based on diffusive
partial-state coupling has also been considered, including static
design (\cite{liu-saberi-stoorvogel-zhang-ijrnc,
	liu-zhang-saberi-stoorvogel-auto,liu-zhang-saberi-stoorvogel-ejc}),
dynamic design (\cite{kim-shim-back-seo},
\cite{seo-back-kim-shim-iet}, \cite{seo-shim-back},
\cite{stoorvogel-saberi-zhang-auto2017}, \cite{su-huang-tac},
\cite{tuna3}), and the design with additional communication
(\cite{li-soh-xie-tac2019},
\cite{li-duan-chen-huang}, and \cite{scardovi-sepulchre}). Recently,
scale-free collaborative protocol designs are developed for
homogeneous and heterogeneous MAS
\cite{chowdhury-khalil,donya-liu-saberi-stoorvogel-ACC2020} and for
MAS subject to actuator saturation
\cite{liu-saberi-stoorvogel-nojavanzadeh-cdc19}.

Meanwhile, if the agents have absolute measurements of their own
dynamics in addition to relative information from the network, they
are said to be introspective, otherwise, they are called
non-introspective. There exist some results about these two types of
agents, for example, introspective agents
(\cite{kim-shim-seo,yang-saberi-stoorvogel-grip-journal}, etc), and
non-introspective agents
(\cite{grip-yang-saberi-stoorvogel-automatica,
	wieland-sepulchre-allgower}, etc).

Synchronization and almost synchronization in presence of external disturbances are studied in the literature, where three classes of disturbances have been considered namely:
\begin{enumerate}
	 \item Disturbances and measurement noise with known frequencies.
	 \item Deterministic disturbances with finite power.
	 \item Stochastic disturbances with bounded variance.\\
	 \end{enumerate}
 
For disturbances and measurement noise with known frequencies, it is shown in \cite{zhang-saberi-stoorvogel-ACC2015} and \cite{zhang-saberi-stoorvogel-CDC2016} that actually exact synchronization is achievable. This is shown in \cite{zhang-saberi-stoorvogel-ACC2015} for heterogeneous MAS with minimum-phase and non-introspective agents and networks with time-varying directed communication graphs. Then, \cite{zhang-saberi-stoorvogel-CDC2016} extended this results for non-minimum phase agents utilizing localized information exchange. 

For deterministic disturbances with finite power, the notion of $H_\infty$ almost synchronization is introduced by Peymani et.al for homogeneous MAS with non-introspective agents utilizing additional communication exchange \cite{peymani-grip-saberi}. The goal of $H_\infty$ almost synchronization is to reduce the impact of disturbances on the synchronization error to an arbitrarily degree of accuracy (expressed in the $H_\infty$ norm). This work was extended later in \cite{peymani-grip-saberi-wang-fossen,zhang-saberi-grip-stoorvogel,zhang-saberi-stoorvogel-sannuti2} to heterogeneous MAS with non-introspective agents and without the additional communication and for network with time-varying graphs. $H_\infty$ almost synchronization via static protocols is studied in \cite{stoorvogel2018squared} and
\cite{stoorvogel-saberi-liu-nojavanzadeh-ijrnc19} for MAS with passive and passifiable agents. Recently, in \cite{stoorvogel-saberi-zhang-liu-ejc} necessary and sufficient conditions are provided for solvability $H_\infty$ almost synchronization of homogeneous networks with non-introspective agents and without additional communication exchange. Finally, we developed a scale-free framework for $H_\infty$ almost state synchronization for homogeneous network \cite{liu-saberi-stoorvogel-donya-almost-automatica} utilizing suitably designed localized information exchange.

In the case of stochastic disturbances with bounded variance, the concept of stochastic almost synchronization is introduced by \cite{zhang-saberi-stoorvogel-stochastic} and \cite{zhang-saberi-stoorvogel-stochastic2} which in the latter both stochastic disturbance and disturbance with known frequency are present. The idea of stochastic almost synchronization is to reduce the stochastic RMS norm of synchronization error arbitrary small in the presence of colored stochastic disturbances that can be modeled as the output of linear time invariant systems driven by white noise with unit power spectral intensities. By augmenting this model with agent model one can essentially assume that stochastic disturbance is white noise with unit power spectral intensities. In this case under linear protocols the stochastic RMS norm of synchronization error is the $H_2$ norm of the transfer function from disturbance to the synchronization error. As such one can formulate the stochastic almost synchronization equivalently in a deterministic framework requiring to reduce the $H_2$ norm of the transfer function from disturbance to synchronization error arbitrary small. This deterministic approach is referred to as almost $H_2$ synchronization problem which is equivalent to stochastic almost synchronization problem. Recent work on $H_2$ almost synchronization problem are \cite{stoorvogel-saberi-zhang2}, and \cite{stoorvogel-saberi-zhang-liu-ejc} which provided necessary and sufficient conditions for solvability of $H_\infty$ almost synchronization for homogeneous networks with non-introspective agents and without additional communication exchange. Finally, $H_2$ almost synchronization via static protocols is also studied in \cite{stoorvogel-saberi-liu-nojavanzadeh-ijrnc19} for MAS with passive and passifiable agents.

In this paper, we consider stochastic disturbances with bounded variance and we develop scale-free framework to solve $H_2$ almost state
synchronization problem for homogeneous MAS. We design a class of linear parameterized dynamic
protocol utilizing localized information exchange for both networks with full- and
partial-state coupling. The linear dynamic protocol achieves $H_2$ almost state synchronization for any communication network with any
number of agents which contains a
spanning tree. The main contribution of this work is that the protocol design does not require any information of the communication network
such as a lower bound of non-zero eigenvalue of the associated Laplacian
matrix and the number of agents. It is worth to note that, so far in all the works of the literature on $H_2$ almost synchronization, the protocol design requires at least some information about the communication network
such as bounds on the spectrum of associated Laplacian matrix and the number of
agents.
%\begin{remark}
%	Note that one can model a class of colored stochastic noise as the output of a linear system driven by white noise. Therefore, in this paper without loss of generality we embed this linear model in the agent models and suppose the generalized agents are driven by independent zero mean white noise. Therefore, we formulate deterministic framework for almost synchronization where reducing the RMS value of the
%	transfer function of disagreement dynamics to stochastic disturbances is equivalent to reducing the $H_2$ norm of the transfer function of disagreement dynamics to white noise to any arbitrary degree of accuracy.
%\end{remark}

\subsection*{Notations and Background}
Given a matrix $A\in \mathbb{R}^{m\times n}$, $A\T$ and $A^*$ denote
transpose and conjugate transpose of $A$ respectively while $\|A\|_2$
denotes the induced 2-norm (which has submultiplicative property). The $\mathrm{im}(\cdot)$ denote the image of matrix (vector). A square matrix $A$ is said to be Hurwitz stable if all its eigenvalues are in the open left half complex plane. $A\otimes B$ depicts the Kronecker product between $A$ and $B$. $I_n$ denotes the
$n$-dimensional identity matrix and $0_n$ denotes $n\times n$ zero
matrix; sometimes we drop the subscript if the dimension is clear from
the context. For a deterministic continuous-time signal $v(t)$, we denote the $L_2$ norm by $\| v \|_{L_2}$ and its \emph{Root Mean Square (RMS)} value is defined by 
\begin{equation}
\|v\|_{RMS}=\left(\lim_{T\to \infty}\frac{1}{T}\int_{0}^{T}v(t)\T v(t)dt\right)^\frac{1}{2}
\end{equation}   
and for a stochastic signal $v(t)$ which is modeled as wide-sense stationary stochastic process, the $\|v(t)\|_{RMS}$ is given by
\begin{equation}\label{rms-s}
\|v(t)\|_{RMS}=\left( \mathbf{E}[v\T(t)v(t)]\right)^\frac{1}{2}
\end{equation}
where $\mathbf{E}[\cdot]$ stands for the expectation operation. For stochastic signals that approach
wide-sense stationarity as time $t$ goes on to infinity (i.e. for asymptotically wide-sense stationary signals) \eqref{rms-s} is rewritten as 
\begin{equation}\label{rms-s-s}
\|v(t)\|_{RMS}=\left( \lim_{t\to \infty}\mathbf{E}[v\T(t)v(t)]\right)^\frac{1}{2}.
\end{equation}
For a continuous-time system having a $q\times l$ stable transfer function $G(s)$, the $H_2$ norm of $G(s)$ is defined as
\[
\|G\|_{H_2}=\left(\frac{1}{2\pi}\trace\left[\int_{-\infty}^{+\infty}G(j\omega)G^*(j\omega)d\omega\right]^\frac{1}{2}\right)
\]
By Parseval's theorem, $\|G\|_{H_2}$ can be equivalently be defined as
\[
\|G\|_{H_2}=\left(\trace\left[\int_{0}^{+\infty}g(t)g\T(t)dt\right]^\frac{1}{2}\right)
\]
where $g(t)$ is the weighting function or unit impulse (Dirac distribution) response matrix of $G(s)$, as such for single-input single-output system $\|G\|_{H_2}=\|g\|_{L_2}$. The $H_2$ norm of $G(s)$, can be interpreted as the RMS value of the output when the given system is driven by independent zero mean white noise with unit power spectral densities. Note that the $H_2$ norm of a stable transfer function $G(s)$ is finite if and only if it is strictly proper. The $H_\infty$ norm of $G(s)$ is defined as 
\[
\|G\|_{H_\infty}:=\sup_\omega \sigma_{\max}[G(j\omega)]
\]
where $\sigma_{\max}$ is the largest singular value of $G(j\omega)$. Let $\omega(t)$ and $z(t)$ be energy signals which are respectively the input and the corresponding output of the given system. Then, The $H_\infty$ norm of $G(s)$ turns out to coincide with its RMS gain, namely 
\[
\|G\|_{H_\infty}=\|G\|_{RMS\text{ }gain} \sup_{\|\omega\|\neq 0} \frac{\|z\|_{RMS}}{\|\omega\|_{RMS}}
\]

An important property of the $H_\infty$ norm is that it is submultiplicative. That is for transfer functions $G_1$ and $G_2$, we have
\[
\|G_1G_2\|_{H_\infty} \leq\|G_1\|_{H_\infty}\|G_2\|_{H_\infty}.
\]

 A \emph{weighted graph} $\mathcal{G}$ is defined by a triple
$(\mathcal{V}, \mathcal{E}, \mathcal{A})$ where
$\mathcal{V}=\{1,\ldots, N\}$ is a node set, $\mathcal{E}$ is a set of
pairs of nodes indicating connections among nodes, and
$\mathcal{A}=[a_{ij}]\in \mathbb{R}^{N\times N}$ is the weighting
matrix. Each pair in $\mathcal{E}$ is called an \emph{edge}, where
$a_{ij}>0$ denotes an edge $(j,i)\in \mathcal{E}$ from node $j$ to
node $i$ with weight $a_{ij}$. Moreover, $a_{ij}=0$ if there is no
edge from node $j$ to node $i$. We assume there are no self-loops,
i.e.\ we have $a_{ii}=0$. A \emph{path} from node $i_1$ to $i_k$ is a
sequence of nodes $\{i_1,\ldots, i_k\}$ such that
$(i_j, i_{j+1})\in \mathcal{E}$ for $j=1,\ldots, k-1$. A
\emph{directed tree} with root $r$ is a subgraph of the graph
$\mathcal{G}$ in which there exists a unique path from node $r$ to
each node in this subgraph. A \emph{directed spanning tree} is a
directed tree containing all the nodes of the graph. 

For a weighted graph $\mathcal{G}$, the matrix
$L=[\ell_{ij}]$ with
\[
\ell_{ij}=
\begin{system}{cl}
\sum_{k=1}^{N} a_{ik}, & i=j,\\
-a_{ij}, & i\neq j,
\end{system}
\]
is called the \emph{Laplacian matrix} associated with the graph
$\mathcal{G}$. The Laplacian matrix $L$ has all its eigenvalues in the
closed right half plane and at least one eigenvalue at zero associated
with right eigenvector $\textbf{1}$, i.e. a vector with all entries
equal to $1$. When graph contains a spanning tree,  then it follows from \cite[Lemma
$3.3$]{ren-beard} that the Laplacian matrix $L$ has a simple
eigenvalue at the origin, with the corresponding right eigenvector
$\textbf{1}$, and all the other eigenvalues are in the open right-half
complex plane.

\section{Problem formulation}

Consider a MAS composed of $N$ identical linear time-invariant agents
of the form,
\begin{equation}\label{homst-agent-model}
\begin{system*}{cl}
\dot{x}_i &= Ax_i +B u_i+E\omega_i,  \\
y_i &= Cx_i,
\end{system*}\qquad (i=1,\ldots,N)
\end{equation}
where $x_i\in\R^n$, $u_i\in\R^m$, $y_i\in\R^p$ are respectively the
state, input, and output vectors of agent $i$, and
$\omega_i\in\R^{w}$ is the external disturbance.
%\begin{remark}
%	Note that one can model a class of colored stochastic noise as the output of a linear system driven by white noise. Therefore, in this paper without loss of generality we embed this linear model in the agent models and suppose the generalized agents are driven by independent zero mean white noise. 
%\end{remark}

%Meanwhile, system
%\eqref{homst-agent-model} satisfies the following assumption. 
%\begin{assumption}\label{Aass}
%  We assume that:
%  \begin{enumerate}		
%  \item\label{ass1a} $(A,B)$ are stabilizable and $(C,A)$ are detectable.
%  \item\label{ass1b} All eigenvalues of $A$ are in the closed left half plane. 
%  \end{enumerate}
%\end{assumption}

%Part \eqref{ass1a} of assumption \ref{Aass} is an obvious necessary
%conditions. Part \eqref{ass1b} is a critical assumption in our
%\emph{scale-free} design methodology. However, this assumption is not
%a strong assumption in practical circumstances.

The communication network provides each agent with a linear
combination of its own outputs relative to that of other neighboring
agents. In particular, each agent $i\in\{1,\ldots,N\}$ has access to the
quantity,
\begin{equation}\label{homst-zeta}
\zeta_i = \sum_{j=1}^{N}a_{ij}(y_i-y_j)=\sum_{j=1}^{N}\ell_{ij}y_j,
\end{equation}
where $a_{ij}\geq 0$ and $a_{ii}=0$ indicate the communication among
agents while $\ell_{ij}$ denote the coefficients of the associated
Laplacian matrix $L$. This communication topology of the network can
be described by a weighted and directed graph $\mathcal{G}$ with nodes
corresponding to the agents in the network and the weight of edges
given by the coefficient $a_{ij}$.  

The MAS \eqref{homst-agent-model} and \eqref{homst-zeta} is referred to as MAS with full-state coupling when $C=I$, otherwise it is called MAS with partial-state coupling.

%In general, we refer to this as \emph{partial-state coupling} since
%only part of the states are communicated over the network. However,
%when $C=I$, (i.e. $y_i=x_i$) all states are communicated over the network and we refer to
%this case as \emph{full-state coupling}.
% Then, the original agents are
%expressed as
%\begin{equation}\label{homst-agent-model2}
%  \dot{x}_i = Ax_i +B u_i+E\omega_i 
%\end{equation}
%Therefore $\zeta_i$ is rewritten as
%\begin{equation}\label{zeta-x}
%  \zeta_i = \sum_{j=1}^{N}a_{ij}(x_i-x_j) = \sum_{j=1}^{N}\ell_{ij}x_j.
%\end{equation}

%Then, for communication network, we have the following assumption:
%
%\begin{assumption}\label{ass1}
%  The graph $\mathcal{G}$, describing the communication topology of the
%  network, contains a directed spanning tree.
%\end{assumption}
%
%If the above assumption is satisfied, then it follows from \cite[Lemma
%$3.3$]{ren-beard} that the Laplacian matrix $L$ has a simple
%eigenvalue at the origin, with the corresponding right eigenvector
%$\textbf{1}$, and all the other eigenvalues are in the open right-half
%complex plane.

Let $N$ be any positive number and define $\bar{x}_i=x_i-x_N$, while
\[
\bar{x}=\begin{pmatrix} \bar{x}_1\\ \vdots \\ \bar{x}_{N-1}
\end{pmatrix}
\quad\text{ and }\quad
\omega=\begin{pmatrix} \omega_1\\ \vdots \\ \omega_N 
\end{pmatrix}.
\]
We denote by $T_{\omega\bar{x}}$ the transfer function from $\omega$
to $\bar{x}$.

%Obviously, state synchronization is achieved if
%\begin{equation}\label{synch_org}
%  \lim_{t\to \infty} (x_i-x_j)=0 \quad  \text{for all $i,j \in {1,...,N}$}.
%\end{equation}

In this paper, we introduce an localized
exchange of information among protocols. In particular, each agent 
$i=1,\ldots, N$ has access to localized information, denoted by
$\hat{\zeta}_i$, of the form
\begin{equation}\label{eqa1}
\hat{\zeta}_i=\sum_{j=1}^Na_{ij}(\xi_i-\xi_j)
\end{equation}
where $\xi_j\in\mathbb{R}^n$ is a variable produced internally by
agent $j$ which will be appropriately chosen in the coming sections.\\

In this paper, we focus on scale-free stochastic almost state synchronization for MAS subject to external stochastic disturbances. The scale-free design framework does not require information of the communication topology and the size of the network.

 We adopt a deterministic framework for stochastic almost state synchronization as explained in the Introduction that equivalently we focus on scale-free $H_2$ almost state synchronization problem for MAS subject to external stochastic disturbances. More specifically reducing the stochastic RMS norm of synchronization error to any arbitrary degree of accuracy is equivalent to reducing the $H_2$ norm of the transfer function of synchronization error to the disturbance with desired arbitrary degree of accuracy.

We formulate the scale-free $H_2$ almost state
synchronization problem of a MAS with localized information
exchange. 
%Namely, we design one-shot protocol design which is scalable
%and is for any number of agents and any communication graph which
%contains a spanning tree.

%\begin{problem}\label{prob3} 
%	The \textbf{scale-free \boldmath $H_\infty$ almost state
%		synchronization problem with localized information exchange
%		(scale-free $H_\infty$-ASSWLIE)} for MAS \eqref{homst-agent-model}
%	and \eqref{homst-zeta} is to find, if possible, a fixed collaborative linear
%	protocol parameterized in terms of a scalar parameter $\rho$ of the
%	form:
%	\begin{equation}\label{protoco1}
%	\begin{system}{cl}
%	\dot{x}_{i,c}&=A_{c}(\rho)x_{i,c}+B_{c}(\rho){\zeta}_i
%	+C_{c}(\rho)\hat{\zeta}_i\\
%	u_i&=F_c(\rho)x_{i,c}
%	\end{system}
%	\end{equation}
%	where $\hat{\zeta}_i$ is defined by \eqref{eqa1}, with $\xi_i=H_c x_{i,c}$ with $x_{i,c}\in \mathbb{R}^{n_c}$ such that for any number of agents $N$, and any communication graph $\mathcal{G}$ we have:
%	\begin{itemize}
%		\item in the absence of the disturbance $\omega$, for all initial
%		conditions the state synchronization \eqref{synch_org} is achieved
%		for any $\rho\geq1$.
%		\item in the presence of the disturbance $\omega$, for any $\gamma>0$,
%		one can render the $H_\infty$ norm from $\omega$ to $x_i-x_j$ less
%		than $\gamma$ by choosing $\rho$ sufficiently large.
%	\end{itemize}
%\end{problem}

\begin{problem}\label{prob4}
  The \textbf{scale-free \boldmath $H_2$ almost state synchronization
    problem with localized information exchange (scale-free
    $H_2$-ASSWLIE)} for MAS \eqref{homst-agent-model} and
  \eqref{homst-zeta} is to find, if possible, a fixed linear protocol
  parameterized in terms of a scalar parameter $\rho$ of the form:
  \begin{equation}\label{protoco1}
  \begin{system}{cl}
  \dot{x}_{i,c}&=A_{c}(\rho)x_{i,c}+B_{c}(\rho){\zeta}_i
  +C_{c}(\rho)\hat{\zeta}_i\\
  u_i&=F_c(\rho)x_{i,c}
  \end{system}
  \end{equation}
  where $\hat{\zeta}_i$ is defined by \eqref{eqa1}, with $\xi_i=H_c x_{i,c}$ with $x_{i,c}\in \mathbb{R}^{n_c}$ such that for any number of agents $N$, and any communication graph $\mathcal{G}$ we have:
  \begin{itemize}
  	\item in the absence of the disturbance $\omega$, for all initial
  	conditions the state synchronization 
  	 \begin{equation}\label{synch_org}
  	\lim_{t\to \infty} (x_i-x_j)=0 \quad  \text{for all $i,j \in {1,...,N}$}.
  	\end{equation}
  	 is achieved
  	for any $\rho\geq1$.
  	\item in the presence of the disturbance $\omega$, for any $\gamma>0$,
  	one can render the $H_2$ norm from $\omega$ to $x_i-x_j$ less
  	than $\gamma$ by choosing $\rho$ sufficiently large.
  \end{itemize}
The architecture of the protocol \eqref{protoco1} is shown in Figure \ref{Architecture}.
\end{problem}

\begin{figure}[t]
	\includegraphics[width=9cm, height=5cm]{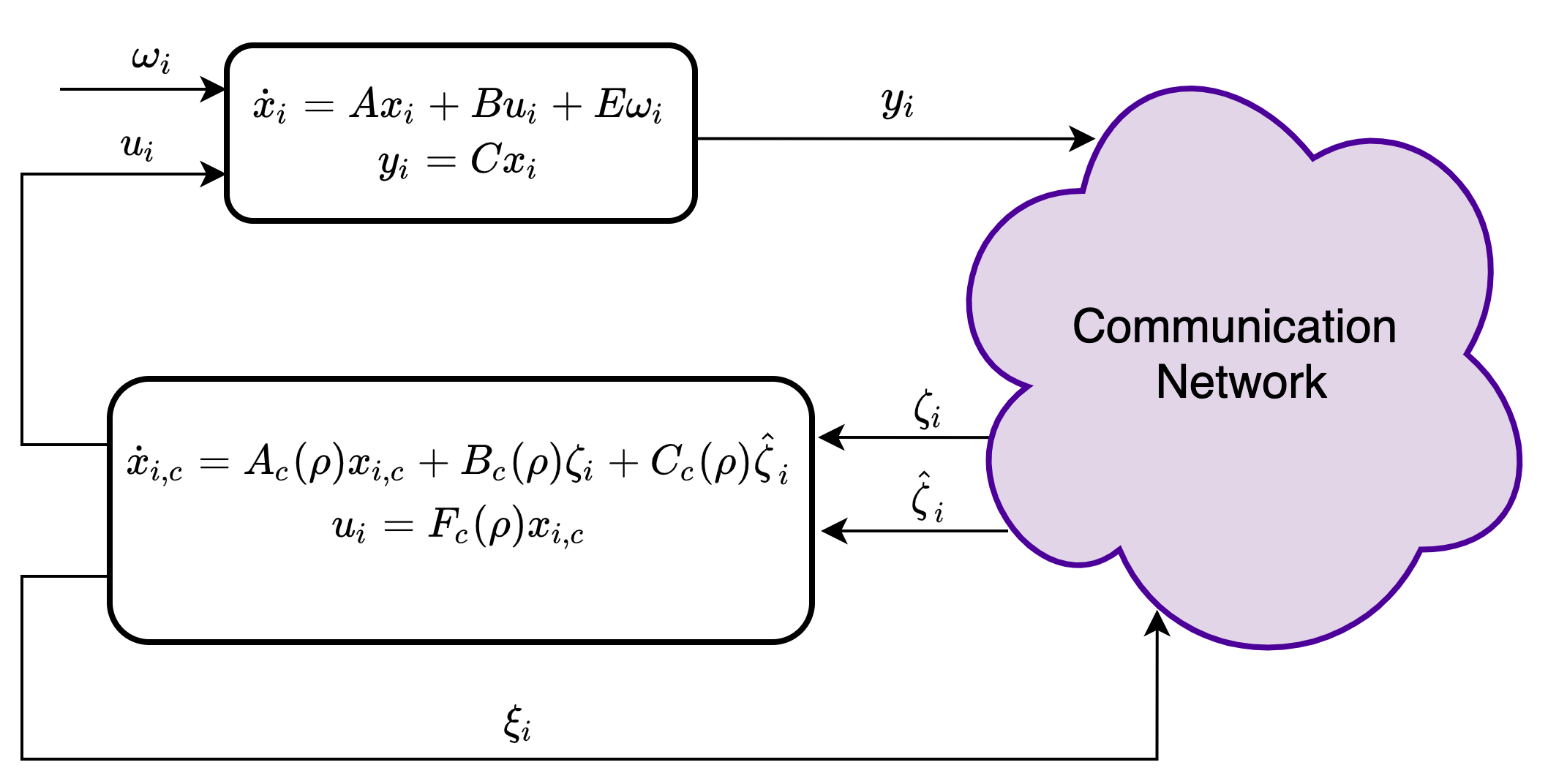}
	\centering
	%\vspace*{-10mm}
	\caption{Architecture of scale-free collaborative protocol for $H_2$ almost state synchronization}\label{Architecture}
\end{figure}

\begin{remark}
	We would like to emphasize that in our formulation of Problem
	\ref{prob4}, the protocol \eqref{protoco1},
	i.e. $(A_c(\rho),B_c(\rho),C_c(\rho),F_c(\rho))$ to be solely designed based on agent
	model $(A,B,C,E)$ and independent of communication graph and the
	number of agents. 
\end{remark}

\section{\boldmath $H_2$ almost state synchronization: solvability conditions and protocol design}

In this section, we will consider the $H_2$ almost state
synchronization problem of a MAS for both cases of full- and
partial-state coupling.

\subsection{Full-state coupling}
We use the following dynamic protocol with localized information
exchanges.

\begin{Protocol}[h]	
	\caption{Full-state coupling}		
		We design collaborative protocols for agent $i\in\{1,\ldots,N\}$ as
		\begin{equation}\label{pscp1}
		\begin{system}{cll}
		\dot{\chi}_i &=& A\chi_i+Bu_i+\rho{\zeta}_i-\rho\hat{\zeta}_i ,\\
		u_i &=& -\rho B\T P\chi_i
		\end{system}
		\end{equation}
		where $\rho$ is a parameter satisfying $\rho\geq1$ while $P$ is the unique solution of algebraic Riccati equation
		\begin{equation}\label{arespecial}
		A\T P + P A -  PBB\T P +  I = 0 
		\end{equation}
		and ${\zeta}_i$ is defined by \eqref{homst-zeta}. The agents communicate $\xi_i=\chi_i$, therefore each agent has access to local information
		\begin{equation}\label{info1}
		\hat{\zeta}_i=\sum_{j=1}^Na_{ij}(\chi_i-\chi_j).
		\end{equation}
\end{Protocol} 
Then, we have the following theorem for scale-free $H_2$-ASSWLIE case.
\begin{theorem}\label{mainthm2}	
 Consider a MAS described by \eqref{homst-agent-model} and
\eqref{homst-zeta}, where $C=I$.
\begin{enumerate}  \item The scale-free $H_2$-ASSWLIE problem as stated in Problem
	\ref{prob4} is solvable \textbf{if and only if}
	\begin{enumerate}
		\item\label{ass1a} $(A,B)$ is stabilizable.
		\item\label{ass1b} All eigenvalues of $A$ are in the closed left half plane. 
		\item The graph $\mathcal{G}$, describing the communication topology of the network, contains a directed spanning tree.
		\item\begin{equation}\label{eqtc}
		\mathrm{im}E\subseteq\mathrm{im}B.
		\end{equation}
	\end{enumerate}
	\item The linear dynamic Protocol $1$ solves
	scale-free $H_2$-ASSWLIE. In other words, for any number of
	agents $N$ and any graph $\mathcal{G}$  in the absence of the disturbance $\omega$, for any
	$\rho\geq1$, the state synchronization \eqref{synch_org} is achieved
	for any initial conditions while in the presence of the disturbance
	$\omega$, for any $\gamma>0$, the $H_2$ norm from $\omega$ to
	$x_i-x_j$ is less that $\gamma$ by choosing $\rho$ sufficiently
	large.
\end{enumerate}
\end{theorem}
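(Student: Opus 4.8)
The plan is to eliminate the controller states through the change of variables $\eta_i = x_i-\chi_i$, which I expect to turn the closed loop into a cascade. Substituting $u_i=-\rho B\T P\chi_i$ into \eqref{pscp1} and \eqref{homst-agent-model} and using $\zeta_i-\hat\zeta_i=\sum_j a_{ij}(\eta_i-\eta_j)$, I would obtain in stacked form $\dot\eta=(I_N\otimes A-\rho(L\otimes I))\eta+(I_N\otimes E)\omega$ together with $\dot x=(I_N\otimes(A-\rho BB\T P))x+\rho(I_N\otimes BB\T P)\eta+(I_N\otimes E)\omega$; note that the $\eta$ subsystem is autonomous and drives the $x$ subsystem. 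Writing $\bar x=(\bar D\otimes I)x$ and $\bar\eta=(\bar D\otimes I)\eta$ with $\bar D=\begin{pmatrix}I_{N-1}&-\mathbf 1\end{pmatrix}$, and using the standard identity $\bar D L=\hat L\bar D$ where $\hat L$ carries exactly the nonzero eigenvalues of $L$, the error dynamics become block upper triangular with diagonal blocks $I_{N-1}\otimes(A-\rho BB\T P)$ and $I_{N-1}\otimes A-\rho(\hat L\otimes I)$, both forced by $(\bar D\otimes E)\omega$.

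The first milestone is internal stability, hence synchronization \eqref{synch_org} when $\omega=0$. The block $A-\rho BB\T P$ is Hurwitz for every $\rho\ge1$: testing $V=\chi\T P\chi$ with the Riccati identity \eqref{arespecial} gives $(A-\rho BB\T P)\T P+P(A-\rho BB\T P)=-I-(2\rho-1)PBB\T P\le -I$. For the second block, since the graph contains a spanning tree the eigenvalues of $\hat L$ lie in the open right half plane, so triangularizing $\hat L$ shows the eigenvalues of the block are $\mu-\rho\lambda$ with $\mu\in\sigma(A)$, $\lambda\in\sigma(\hat L)$; assumption \ref{ass1b} gives $\re(\mu)\le0$, whence $\re(\mu-\rho\lambda)<0$ for all $\rho\ge1$. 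This step also reveals why \ref{ass1b} is necessary: synchronization is required for all $\rho\ge1$ and all graphs, and there exist graphs whose Laplacian has a nonzero eigenvalue of arbitrarily small positive real part, which forces $\re(\mu)\le0$.

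For the disturbance attenuation I would write the transfer function as the sum $T_{\omega\bar x}=\Phi(\bar D\otimes E)+\rho\,\Phi(I_{N-1}\otimes BB\T P)\Psi(\bar D\otimes E)$, where $\Phi$ and $\Psi$ are the resolvents of the two diagonal blocks. Invoking the matching condition \eqref{eqtc} to write $E=BM$, the first term factors through $\Phi(s)(I_{N-1}\otimes B)$, whose blocks $(sI-(A-\rho BB\T P))^{-1}B$ have $H_2$ norm of order $\rho^{-1/2}$, so it vanishes as $\rho\to\infty$. The second term is the crux, since it carries an explicit factor $\rho$; I would control it by submultiplicativity of the norms, $\|\rho\,\Phi(I\otimes BB\T P)\Psi(I\otimes B)(\bar D\otimes M)\|_{H_2}\le \rho\,\|\Phi(I\otimes BB\T P)\|_{H_\infty}\,\|\Psi(I\otimes B)\|_{H_2}\,\|\bar D\otimes M\|_2$, combined with the scaling estimates $\|\Phi(I\otimes BB\T P)\|_{H_\infty}=O(\rho^{-1})$ (obtained from the Riccati/Lyapunov structure above) and $\|\Psi(I\otimes B)\|_{H_2}=O(\rho^{-1/2})$, which together give an overall bound of order $\rho^{-1/2}\to0$.

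The main obstacle is making the $O(\rho^{-1})$ and $O(\rho^{-1/2})$ estimates rigorous and, in particular, handling the non-normality of the directed Laplacian: after triangularizing $\hat L$ the resolvent $\Psi$ is only block upper triangular, so the off-diagonal coupling blocks must be shown not to spoil the $\rho^{-1/2}$ scaling of $\|\Psi(I\otimes B)\|_{H_2}$. Since $\rho$ is permitted to depend on $N$ and $\mathcal G$, the constants in these estimates may depend on the diagonalizing transformation and on $N$, which I expect keeps the bookkeeping manageable. Necessity of \ref{ass1a}, of the spanning-tree condition, and of \eqref{eqtc} I would establish separately: stabilizability and the spanning tree are classical requirements for internal stabilization and for synchronizing arbitrary initial conditions, while if $\IM E\not\subseteq\IM B$ the component of $E$ orthogonal to $\IM B$ injects disturbance directly into $\bar x$ through a transfer function whose $H_2$ norm is bounded below uniformly in $\rho$, precluding almost decoupling.
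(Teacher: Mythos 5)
Your reduction is the same one the paper uses: eliminating the protocol state via $x_i-\chi_i$, projecting with $\begin{pmatrix}I&-\mathbf 1\end{pmatrix}$, and arriving at the cascade with diagonal blocks $I\otimes(A-\rho BB\T P)$ and $I\otimes A-\rho(\bar L\otimes I)$ is exactly the paper's \eqref{newsystem2}--\eqref{newsystem22}, and your stability and necessity arguments also track the paper's. The problem is in the cross term. You bound
\[
\bigl\|\rho\,\Phi(I\otimes BB\T P)\,\Psi(I\otimes B)(\bar D\otimes M)\bigr\|_{H_2}\le \rho\,\|\Phi(I\otimes BB\T P)\|_{H_\infty}\,\|\Psi(I\otimes B)(\bar D\otimes M)\|_{H_2},
\]
and the whole argument then hinges on $\|\Phi(I\otimes BB\T P)\|_{H_\infty}=O(\rho^{-1})$, which you assert but do not prove. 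The ``Riccati/Lyapunov structure above'' does not deliver it: running the standard dissipation inequality with $V=q\T Pq$ on $\dot q=(A-\rho BB\T P)q+BB\T P\,w$, $z=q$, and using $(A-\rho BB\T P)\T P+P(A-\rho BB\T P)=-I-(2\rho-1)PBB\T P$, completion of squares gives only $\|z\|_{L_2}\le \|B\T P\|_2(2\rho-1)^{-1/2}\|w\|_{L_2}$, i.e.\ an $O(\rho^{-1/2})$ bound on that $H_\infty$ norm. Only the directions excited through $B$ are pushed left at rate $\rho$ in $A-\rho BB\T P$, so there is no reason its resolvent should shrink like $\rho^{-1}$ uniformly in frequency; any sharper claim would need a genuine frequency-domain (return-difference) argument that you have not supplied. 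With the provable $O(\rho^{-1/2})$ in its place, your cross term is $\rho\cdot O(\rho^{-1/2})\cdot O(\rho^{-1/2})=O(1)$ and the attenuation argument collapses.

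The fix is to distribute the two norms the other way, which is what the paper does: put the $H_2$ norm on the $\Phi$-factor, $\|T_3^{-1}(I\otimes B)\|_{H_2}=O(\rho^{-1/2})$ (or better) via the controllability Gramian bound $\Theta=\rho^{-1}P^{-1}$, and put the $H_\infty$ norm on the $\Psi$-factor, $\|T_2^{-1}\|_{H_\infty}\le W_2/\rho$. The latter estimate is the one that genuinely scales like $\rho^{-1}$, because the \emph{entire} state matrix $I\otimes A-\rho\bar L\otimes I$ recedes at rate $\rho$ thanks to the invertibility of $\bar L$ (rescale $s\mapsto\rho s$). Then $\rho\cdot O(\rho^{-1/2})\cdot O(\rho^{-1})\to0$ and the theorem follows; the constant $W_2$ may depend on $N$ and the graph, which, as you correctly note, is permitted. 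Your necessity sketch for \eqref{eqtc} is also looser than the paper's (which invokes the $H_2$ almost disturbance decoupling results of Saberi--Lin--Stoorvogel), but that is a citation issue rather than a gap in the idea.
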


To prove this theorem, we need the following lemma.
\begin{lemma}\label{LbarL}
	Let a Laplacian matrix $L\in \R^{N\times N}$ be given associated
	with a graph that contains a directed spanning tree. We define
	$\bar{L}\in \R^{(N-1)\times (N-1)}$ as the matrix
	$\bar{L}=[\bar{\ell}_{ij}]$ with $\bar{\ell}_{ij} = \ell_{ij}-\ell_{Nj}$. 
	Then the eigenvalues of $\bar{L}$ are equal to the nonzero
	eigenvalues of $L$.
\end{lemma}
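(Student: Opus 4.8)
The plan is to produce an explicit similarity transformation of $L$ that block-triangularizes it, placing $\bar{L}$ in one diagonal block and isolating the known zero eigenvalue in the other. Since similar matrices share the same spectrum, this will immediately give the claim.

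First I would introduce the invertible matrix
\[
P = \begin{pmatrix} I_{N-1} & -\mathbf{1}_{N-1} \\ 0 & 1 \end{pmatrix}, \qquad P^{-1} = \begin{pmatrix} I_{N-1} & \mathbf{1}_{N-1} \\ 0 & 1 \end{pmatrix},
\]
which encodes exactly the coordinate change $\bar{x}_i = x_i - x_N$ underlying the definition of $\bar{L}$. Left-multiplication by $P$ subtracts the last row of $L$ from each of the first $N-1$ rows, so the top-left $(N-1)\times(N-1)$ block of $PL$ has entries $\ell_{ij}-\ell_{Nj}$, i.e.\ it equals $\bar{L}$, while the last row of $PL$ is left unchanged and equals $(\ell_{N1},\ldots,\ell_{NN})$.

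The key observation is that right-multiplication by $P^{-1}$ leaves the first $N-1$ columns untouched but replaces the last column of $PL$ by the vector of its row sums. Because every row of $L$ sums to zero (equivalently $L\mathbf{1}=0$), every row of $PL$ also sums to zero, so this new last column vanishes identically. Hence
\[
P L P^{-1} = \begin{pmatrix} \bar{L} & 0 \\ * & 0 \end{pmatrix},
\]
which is block lower-triangular with diagonal blocks $\bar{L}$ and the scalar $0$. Its characteristic polynomial therefore factors as $\lambda\,\det(\lambda I - \bar{L})$, so the spectrum of $L$ is the spectrum of $\bar{L}$ together with one extra eigenvalue at the origin.

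Finally I would invoke the spanning-tree hypothesis: as recalled before the problem formulation, a graph containing a directed spanning tree has a Laplacian with a \emph{simple} eigenvalue at $0$ and all remaining eigenvalues in the open right-half plane. The extra zero produced by the triangularization accounts for precisely this simple zero eigenvalue of $L$, so the $N-1$ eigenvalues of $\bar{L}$ must coincide exactly with the $N-1$ nonzero eigenvalues of $L$. I do not anticipate a genuine obstacle; the step requiring the most care is verifying that the row-sum computation forces the last column to vanish (this is exactly where $L\mathbf{1}=0$ enters) and confirming that $\bar{L}$ really is the top-left block rather than a permuted variant.
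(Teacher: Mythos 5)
Your proof is correct, but it takes a genuinely different route from the paper's. The paper starts from the same factorization $\bar{L}=\begin{pmatrix} I & -\mathbf{1}\end{pmatrix} L \begin{pmatrix} I & 0\end{pmatrix}\T$ but then argues at the level of individual eigenvectors: it maps an eigenvector $x$ of $L$ for a nonzero eigenvalue $\lambda$ to $\bar{x}=\begin{pmatrix} I & -\mathbf{1}\end{pmatrix}x$ and checks $\bar{L}\bar{x}=\lambda\bar{x}$ using $L\mathbf{1}=0$, and conversely maps an eigenvector $\bar{x}$ of $\bar{L}$ back to $x=L\begin{pmatrix} I & 0\end{pmatrix}\T\bar{x}$. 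You instead package the same data into a single similarity $PLP^{-1}$ that block-triangularizes $L$ with diagonal blocks $\bar{L}$ and the scalar $0$, and read the conclusion off the characteristic polynomial together with the simplicity of the zero eigenvalue guaranteed by the spanning-tree hypothesis. Your computation is sound: left-multiplication by $P$ does produce the entries $\ell_{ij}-\ell_{Nj}$ in the top-left block, and the last column of $PLP^{-1}$ is $PL\mathbf{1}=0$. What your version buys is that it automatically accounts for algebraic multiplicities --- the spectrum of $L$ is exactly that of $\bar{L}$ with one extra zero appended --- whereas the paper's eigenvector correspondence, taken literally, only matches (geometric) eigenvectors and leaves the multiplicity bookkeeping implicit. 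What the paper's version buys is the explicit eigenvector correspondence $\bar{x}=\begin{pmatrix} I & -\mathbf{1}\end{pmatrix}x$, which mirrors the coordinate change $\bar{x}_i=x_i-x_N$ reused in the closed-loop analysis. Both arguments hinge on the same two facts, and you identify and verify each correctly.
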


\begin{proof}
	We have
	\[
	\bar{L} = \begin{pmatrix} I & -\textbf{1} \end{pmatrix}
	L \begin{pmatrix} I & 0 \end{pmatrix}\T
	\]
	Assume that $\lambda$ is a nonzero eigenvalue of $L$ with
	eigenvector $x$, then
	\[
	\bar{x} = \begin{pmatrix} I & -\textbf{1} \end{pmatrix} x
	\]
	satisfies
	\[
	\begin{pmatrix} I & -\textbf{1} \end{pmatrix} Lx =
	\begin{pmatrix} I & -\textbf{1} \end{pmatrix} \lambda x
	=\lambda \bar{x}
	\]
	for $Lx=\lambda x$ and since $L\textbf{1}=0$ we have $L ( I \quad 0 )\T( I \quad -\textbf{1} )=L$, then we find that 
	\[
	\bar{L} \bar{x} = \begin{pmatrix} I & -\textbf{1} \end{pmatrix}
	L \begin{pmatrix} I & 0 \end{pmatrix}\T\begin{pmatrix} I & -\textbf{1} \end{pmatrix} x=
	\begin{pmatrix} I & -\textbf{1} \end{pmatrix} L x=\lambda \bar{x}.
	\]
	This shows that $\bar{x}$ is an eigenvector of $\bar{L}$ if
	$\lambda\neq0$. It is easily seen that $\bar{x}=0$ if and only if
	$\lambda=0$. Conversely if $\bar{x}$ is an eigenvector of $\bar{L}$
	with eigenvalue $\lambda$ then it is easily verified that $x = L \begin{pmatrix} I & 0 \end{pmatrix}\T \bar{x}$ is an eigenvector of $L$ with eigenvalue $\lambda$.
\end{proof}

\begin{proof*}[Proof of Theorem \ref{mainthm2}]
  Firstly, let $\bar{x}_i=x_i-x_N$ and $\bar{\chi}_i=\chi_i-\chi_N$. We find:
  \[
  \begin{system*}{l}
  \dot{\bar{x}}_i=A{\bar{x}}_i+B(u_i-u_N)+E(\omega_i-\omega_N),\\
  \dot{\bar{\chi}}_i=A\bar{\chi}_i+B(u_i-u_N)
  +\rho\sum_{j=1}^{N-1}\bar{\ell}_{ij}(\bar{x}_j-\bar{\chi}_j),\\ 
  u_i-u_N=-\rho B\T P \bar{\chi}_i.
  \end{system*}
  \]
  Next, we define
  \[
  \bar{x}=\begin{pmatrix}
  \bar{x}_1\\\vdots\\\bar{x}_{N-1}
  \end{pmatrix},\;\;
  \bar{\chi}=\begin{pmatrix}
  \bar{\chi}_1\\\vdots\\\bar{\chi}_{N-1}
  \end{pmatrix} \text{ and } 
  \omega=\begin{pmatrix}
  \omega_1\\\vdots\\\omega_N
  \end{pmatrix}
  %    \bar{x}\T=\begin{pmatrix}
  %      \bar{x}_1\T&\hdots&\bar{x}_{N-1}\T
  %    \end{pmatrix}, 
  %    \bar{\chi}\T=\begin{pmatrix}
  %      \bar{\chi}_1\T&\hdots&\bar{\chi}_{N-1}\T
  %    \end{pmatrix},
  %    \omega\T=\begin{pmatrix}
  %      \omega_1\T&\hdots&\omega_N\T
  %    \end{pmatrix}
  \]	
  and we obtain the following closed-loop system
  \begin{equation*}
  \begin{system}{l}
  \dot{\bar{x}}=(I\otimes A) \bar{x}-\rho( I\otimes BB\T
  P)\bar{\chi} +(\Pi\otimes E) \omega \\
  \dot{\bar{\chi}}=(I\otimes A) \bar{\chi}-\rho(I\otimes BB\T
  P)\bar{\chi} +\rho(\bar{L}\otimes I)(\bar{x}-\bar{\chi})
  \end{system}
  \end{equation*}
  where $\bar{L}$ as defined in Lemma \ref{LbarL} and $\Pi=\begin{pmatrix}
  I&-\textbf{1}
  \end{pmatrix}$.
  Let $e=\bar{x}-\bar{\chi}$, we can obtain  
  \begin{align}
  \dot{\bar{x}}&=[I\otimes (A-\rho BB\T P)] \bar{x}  +\rho (I\otimes BB\T P)e +(\Pi\otimes E)
  \omega \label{newsystem2} \\ 
  \dot{e}&=(I\otimes A-\rho\bar{L}\otimes I)e+(\Pi\otimes E)
  \omega \label{newsystem22} 
  \end{align}
  According to Lemma \ref{LbarL}, we have that the real part of the
  eigenvalues of $\bar{L}$ are positive. Therefore, there exists a
  non-singular transformation matrix $T$ such that
  \begin{equation}\label{boundapl}
  (T\otimes I)(I\otimes A-\rho\bar{L}\otimes I)(T^{-1}\otimes I)
  =I\otimes A-\rho\bar{J}\otimes I, 
  \end{equation}
  where
  \[
  \bar{J}=\begin{pmatrix}
  {\lambda}_2&0&\cdots&0\\
  J_{21}&\ddots&\ddots&\vdots\\
  \vdots&\ddots&\ddots&0\\
  J_{N,1}&\cdots&J_{N,N-1}&{\lambda}_N
  \end{pmatrix}
  \]
  with $\re(\lambda_i)>0$ $(i=2,\cdots,N)$, where $\lambda_i$ are the
  nonzero eigenvalues of $L$.
  
  Then, for the stability of \eqref{boundapl}, we just need to prove
  the stability of $ A-\rho\lambda_i I$ for $i=1,\cdots,N-1$. Since the eigenvalues of $A$ are in the closed
  left half plane and $\rho\geq1$, $A-\rho\lambda_i I$ is
  asymptotically stable, i.e. the real part of its eigenvalues are all
  negative. Since \eqref{boundapl} is asymptotically stable, we find
  that $I\otimes A-\rho\bar{L}\otimes I$
  is asymptotically stable. Further, since $A-\rho BB\T P$ is asymptotically stable too, we have the closed-loop system consisting of \eqref{newsystem2} and \eqref{newsystem22} is asymptotically stable without disturbance $\omega$.  
  Namely the protocol
  \eqref{pscp1} can achieve the state synchronization. And we just
  need to prove the $H_2$ norm from $\omega$ to $\bar{x}$ can be made
  arbitrary small.
	
  From \eqref{newsystem2} and \eqref{newsystem22}, we have
  \begin{align*}
    T_{\omega \bar{x}}(s)&=\begin{pmatrix}
      I&0
    \end{pmatrix}\begin{pmatrix}
      T_3&\;\;-\rho I\otimes BB\T P \\0&T_2
    \end{pmatrix}^{-1}\begin{pmatrix}
      \Pi\otimes E\\
      \Pi\otimes E
    \end{pmatrix}\\
    &=T_3^{-1}\left(I+\rho [I\otimes (BB\T P)] T_2^{-1}\right)(\Pi\otimes BX)\\
    &=T_3^{-1}(\Pi\otimes BX)\\
    &\qquad\qquad+\rho T_3^{-1}(I\otimes B) (I\otimes B\T P)
    T_2^{-1}(\Pi\otimes BX)
  \end{align*}
  since we have \eqref{eqtc} such that $E=BX$ for an $X$, where
  \begin{align*}
  T_2&=sI-(I\otimes A-\rho\bar{L}\otimes I)\\
    T_3&=sI-I\otimes (A-\rho BB\T P)
  \end{align*}
  Then, we have
  \begin{multline*}
    \|T_{\omega \bar{x}}(s)\|_{H_2}\leq \|T_3^{-1}[\Pi\otimes B X]\|_{H_2}\\
    +\|\rho T_3^{-1}(I\otimes B) (I\otimes B\T P) T_2^{-1}[\Pi\otimes BX]\|_{H_2}.
  \end{multline*}	
  We know that $T_3^{-1}(\Pi\otimes B X)$ is the transfer
  matrix of the following system: 
  \[
    \dot{p}=[I\otimes (A-\rho BB\T P)]p+(\Pi\otimes B X) w.
  \]
  Thus let 
  \[
    \Theta_0=\frac{\|\Pi\|_2^2\|X\|_2^2}{\rho} P^{-1}
  \]
  with $P$ satisfying \eqref{arespecial}, then we have
  \begin{align*}
    \Theta_0& (A-\rho BB\T P)\T +(A-\rho BB\T P)\Theta_0
    + \Pi\Pi\T \otimes B XX\T B\T\\ 
    &= \frac{\|\Pi\|_2^2\|X\|_2^2}{\rho}(P^{-1}A\T +AP^{-1})-\|\Pi\|_2^2\|X\|_2^2BB\T\\
    &= -\frac{1}{\rho}\|\Pi\|_2^2\|X\|_2^2P^{-2}-(1-\frac{1}{\rho})\|\Pi\|_2^2\|X\|_2^2BB\T<0
  \end{align*}
  since $\rho\geq1$ and $P>0$. According to \cite[Lemma
  3]{stoorvogel-saberi-zhang-liu-ejc}, the $H_2$ norm from $w$ to $p$
  is less than $\rho^{-1} \|\Pi\|_2^2\|X\|_2^2\|P^{-1}\|_2$, i.e.
  \[
    \|T_3^{-1}[\Pi\otimes (B X)]\|_{H_2}< \frac{\|\Pi\|_2^2\|X\|_2^2\|P^{-1}\|_2}{\rho}.
  \]
  Meanwhile, because $T_3^{-1}(I\otimes B)$ is a strictly proper
  transfer matrix, we have
  \begin{multline*}
    \|\rho T_3^{-1}(I\otimes B) (I\otimes B\T P) T_2^{-1}(\Pi\otimes
    BX) \|_{H_2} \\ \leq \|T_3^{-1}(I\otimes B)\|_{H_2}\| \rho (I\otimes B\T
    P) T_2^{-1}(\Pi\otimes BX) \|_{H_\infty}.
  \end{multline*}
  We knows that $T_3^{-1}(I\otimes B)$ is the transfer matrix of the
  following system:
  \[
    \dot{q}=(A-\rho BB\T P)q+B w.
  \]
  Let
    \[
  \Theta=\rho^{-1}P^{-1},
  \]
%  \[
%    \Theta=\frac{1}{\rho} P^{-1},
%  \]
  with $P$ satisfying \eqref{arespecial}, then we have
  \begin{align*}
    &\Theta (A-\rho BB\T P)\T +(A-\rho BB\T P)\Theta +BB\T\\
    =&{\rho}^{-1}(P^{-1}A\T +AP^{-1})-BB\T\\
    =&-{\rho}^{-1}P^{-2}-(1-{\rho}^{-1})BB\T<0
  \end{align*}
  since $\rho\geq1$ and $P>0$. Correspondingly, the $H_2$ norm from
  $w$ to $q$ is less than $\frac{1}{\rho} \|P^{-1}\|_2$, i.e.
  \[
    \|T_3^{-1}(I\otimes B)\|_{H_2}<\rho^{-1} \|P^{-1}\|_2.
  \]
  Meanwhile, since $\bar{L}$ is invertible,
  there exists a constant $W_2 > 0$ such that
  $\| T_2^{-1}\|_{H_\infty}\leq \frac{W_2}{\rho}$, and thus obtain
  \begin{align*}
    \|T_{\omega \bar{x}}\|_{H_2}\leq&\|T_3^{-1}(\Pi\otimes BX) \|_{H_2}\\
    &\;\;\quad+\|T_3^{-1}(I\otimes B)\|_{H_2}\| \rho(I\otimes B\T P)
    T_2^{-1}(\Pi\otimes BX)\|_{H_\infty}\\
   \leq&{\rho}^{-1}(\|\Pi\|_2^2\|X\|_2^2\|P^{-1}\|_2+W_2\|P^{-1}\|_2\|B\T
    P\|_2\|BX\|_2\|\Pi\|_2) 
  \end{align*}
  i.e.
  \[
    \|T_{\omega \bar{x}}\|_{H_2}< \rho^{-1}\mathbf{M}
  \]
  with 
  \[
    \mathbf{M}=\|P^{-1}\|_2\|\Pi\|_2\left(\|\Pi\|_2\|X\|_2^2+W_2\|B\T P\|_2\|BX\|_2\right)
  \]
  for $\rho\geq1$. It means that we have
  \begin{equation} \tag*{\rule{6pt}{6pt}}
    \|T_{\omega (x_i-x_j)}\|_{H_2}< \rho^{-1}\mathbf{M}.
  \end{equation}
  
  Now, we will prove the necessity. Assume we have a protocol of the
  form \eqref{protoco1} that achieves synchronization for any possible
  graph in the absence of disturbances. It is easily seen that this
  requires that condition (a) is satisfied. On the other hand, we have that
  \[
  \begin{system}{l}
  \dot{\bar{x}}=(I\otimes A) \bar{x}+ (\bar{L}\otimes BF_c(\rho))\bar{\chi}\\
  \dot{\bar{\chi}}=(I\otimes A_c(\rho)) \bar{\chi}+
  (\bar{L}\otimes B_c(\rho)C)\bar{x}+(\bar{L}\otimes C_c(\rho)H_c)\bar{\chi}    
  \end{system}
  \]
  must be asymptotically stable for all possible Laplacian
  matrices. By letting $\bar{L}\rightarrow 0$ we see that in the limit
  the system must have all eigenvalues in the closed left half plane
  which yields that condition (b) must be satisfied. It is well-known
  that state synchronization is impossible to achieve if the network
  does not have a directed spanning tree. Finally, from the result on
  $H_2$ almost disturbance decoupling in \cite[Corollary
  2.4]{saberi-lin-stoorvogel}, we find that (d) is also a necessary
  condition.
\end{proof*}

\subsection{Partial-state coupling}
In this subsection, we will consider $H_2$ almost state
synchronization via partial-state coupling.
\begin{Protocol}[h]	
	\caption{Partial-state coupling}
		We design collaborative protocols for agent $i\in\{1,\ldots,N\}$ as
		\begin{equation}\label{pscp3}
		\begin{system}{cll}
		\dot{\hat{x}}_i &=& A\hat{x}_i-\rho BB\T P\hat{\zeta}_i+\delta^{-2}
		Q_{\rho}C\T({\zeta}_i-C\hat{x}_i) \\ 
		\dot{\chi}_i &=& A\chi_i+Bu_i+\rho \hat{x}_i-\rho \hat{\zeta}_i \\
		u_i &=& -\rho B\T P\chi_i, 
		\end{system}
		\end{equation}
		where $P>0$ is the
		unique solution of \eqref{arespecial}. Since $(A,E,C,0)$ is
		minimum-phase and left invertible, then for any $\rho\geq1$, there
		exists $\delta>0$ small enough such that $Q_{\rho}>0$ is the unique
		solution of
		\begin{equation}\label{arespecial3}
		Q_{\rho}A\T+AQ_{\rho}+EE\T-\delta^{-2}Q_{\rho}C\T CQ_{\rho}+\rho^2Q_{\rho}^2=0.
		\end{equation}
		In this protocol, agents communicate $\xi_i=\chi_i$, i.e. each
		agent has access to localized information \eqref{info1}, while $\zeta_i$ is defined by  \eqref{homst-zeta}.
\end{Protocol} 
\pagebreak

Then, we have the following theorem for MAS via
partial-state coupling.

\begin{theorem}\label{mainthm4}
Consider a MAS described by \eqref{homst-agent-model} and
\eqref{homst-zeta}.
\begin{enumerate}
	\item The scale-free $H_2$-ASSWLIE problem stated in Problem
	\ref{prob4} is solvable \textbf{if and only if}
	\begin{enumerate}
		\item $(A,B)$ are stabilizable and $(C,A)$ are detectable.
		\item All eigenvalues of $A$ are in the closed left half plane. 
		\item 	$(A,E,C,0)$ is minimum phase and left invertible.
		\item The graph $\mathcal{G}$, describing the communication topology of the network, contains a directed spanning tree.
		\item $\mathrm{im}E\subseteq\mathrm{im}B$ (i.e. \eqref{eqtc}).
		\end{enumerate}
	\item  The linear dynamic Protocol $2$ solves Scale-Free
	$H_2$-ASSWLIE, for any number of agents $N$ and any graph
	$\mathcal{G}$ such that in the
	absence of disturbance $\omega$, for any $\rho\geq1$, the state
	synchronization \eqref{synch_org} is achieved for any initial
	conditions and in the presence of disturbance $\omega$, for any
	$\gamma>0$, the $H_2$ norm from $\omega$ to $x_i-x_j$ is less
	that $\gamma$ by choosing $\rho$ sufficiently large.
\end{enumerate}
\end{theorem}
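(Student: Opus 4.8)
The plan is to mirror the full-state argument of Theorem~\ref{mainthm2}, the only new ingredient being the observer component $\hat{x}_i$ in Protocol~$2$. First I would pass to the disagreement coordinates $\bar{x}_i=x_i-x_N$ and $\bar{\chi}_i=\chi_i-\chi_N$ and, crucially, introduce the estimation error $\tilde{x}_i=\bar{\zeta}_i-\hat{x}_i$, where $\bar{\zeta}_i=\sum_{j}\ell_{ij}x_j$ is the ``state-level'' version of $\zeta_i=C\bar{\zeta}_i$ that the observer is designed to reconstruct. Using $u_j=-\rho B\T P\chi_j$ (so that $B\sum_j\ell_{ij}u_j=-\rho BB\T P\hat{\zeta}_i$) and $\zeta_i-C\hat{x}_i=C\tilde{x}_i$, a direct computation shows that the estimation error decouples across agents,
\[
\dot{\tilde{x}}_i = (A-\delta^{-2}Q_\rho C\T C)\tilde{x}_i + E\sum_{j}\ell_{ij}\omega_j,
\]
and that, since $\hat{x}_i=\bar{\zeta}_i-\tilde{x}_i$, Protocol~$2$ is exactly Protocol~$1$ with an extra forcing term $-\rho\tilde{x}_i$ injected into the $\chi$-dynamics. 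Writing $e=\bar{x}-\bar{\chi}$ and $\bar{\tilde{x}}_i=\tilde{x}_i-\tilde{x}_N$, the closed loop then takes the block-triangular (cascade) form
\begin{align*}
\dot{\bar{x}} &= [I\otimes(A-\rho BB\T P)]\bar{x} + \rho(I\otimes BB\T P)e + (\Pi\otimes E)\omega,\\
\dot{e} &= (I\otimes A-\rho\bar{L}\otimes I)e + \rho\bar{\tilde{x}} + (\Pi\otimes E)\omega,\\
\dot{\bar{\tilde{x}}} &= [I\otimes(A-\delta^{-2}Q_\rho C\T C)]\bar{\tilde{x}} + (\Pi L\otimes E)\omega,
\end{align*}
with $\bar{L}$ as in Lemma~\ref{LbarL} and $\Pi=\begin{pmatrix} I & -\textbf{1}\end{pmatrix}$.

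For sufficiency, asymptotic stability (hence synchronization with $\omega=0$) reduces, by the triangular structure, to Hurwitzness of the three diagonal blocks. Two of them, $A-\rho BB\T P$ and $I\otimes A-\rho\bar{L}\otimes I$, are handled verbatim as in Theorem~\ref{mainthm2} via the Riccati equation \eqref{arespecial}, Lemma~\ref{LbarL}, and $\re(\lambda_i)>0$ together with $\rho\geq1$ and the closed-left-half-plane spectrum of $A$. For the observer block I would expand \eqref{arespecial3} to obtain
\[
(A-\delta^{-2}Q_\rho C\T C)Q_\rho + Q_\rho(A-\delta^{-2}Q_\rho C\T C)\T = -\delta^{-2}Q_\rho C\T CQ_\rho - EE\T - \rho^2 Q_\rho^2 < 0,
\]
so that $Q_\rho>0$ is a Lyapunov certificate; minimum-phaseness and left-invertibility of $(A,E,C,0)$ are precisely what guarantee that such a $Q_\rho>0$ exists for $\delta$ small. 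For the $H_2$ estimate I would bound $\|T_{\omega\bar{x}}\|_{H_2}$ by splitting it along the cascade: the two contributions not involving $\bar{\tilde{x}}$ are identical to those bounded in Theorem~\ref{mainthm2} and are $O(\rho^{-1})$, while the new contribution travels $\omega\to\bar{\tilde{x}}\to e\to\bar{x}$.

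The main obstacle is this observer-induced channel. Its transfer function carries two explicit factors of $\rho$ (the gain $\rho$ on $\bar{\tilde{x}}$ in $\dot{e}$ and the gain $\rho(I\otimes BB\T P)$ on $e$ in $\dot{\bar{x}}$), and these must be absorbed by the decay of the intermediate blocks: $\|T_3^{-1}(I\otimes B)\|_{H_2}=O(\rho^{-1})$ and $\|(sI-(I\otimes A-\rho\bar{L}\otimes I))^{-1}\|_{H_\infty}\leq W_2/\rho$, exactly as in the full-state proof. What remains, and is the crux, is to show that the estimation-error transfer $(sI-(A-\delta^{-2}Q_\rho C\T C))^{-1}E$ becomes small as $\rho\to\infty$: comparing \eqref{arespecial3} with the controllability-Gramian Lyapunov equation gives $\trace(\text{Gramian})\leq\trace(Q_\rho)$, and the designed $+\rho^2Q_\rho^2$ term is precisely what forces $Q_\rho$ (with $\delta=\delta(\rho)$ as in the protocol) to shrink fast enough that, after the two factors of $\rho$ are accounted for, the whole channel still tends to zero. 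Assembling the three bounds through submultiplicativity of the $H_\infty$ norm then yields $\|T_{\omega(x_i-x_j)}\|_{H_2}<\rho^{-1}\mathbf{M}'$ for a constant $\mathbf{M}'$ independent of $N$, so any $\gamma>0$ is reached by enlarging $\rho$.

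For necessity I would argue as in Theorem~\ref{mainthm2}: stabilizability of $(A,B)$ and the spanning-tree requirement are forced by solvability of plain synchronization, and letting $\bar{L}\to0$ in the closed loop shows the spectrum of $A$ must lie in the closed left half plane, giving condition~(b). The output-only nature of the measurements additionally forces detectability of $(C,A)$, and the $H_2$ almost disturbance decoupling theory by measurement feedback supplies both the minimum-phase/left-invertibility condition on $(A,E,C,0)$ and $\mathrm{im}E\subseteq\mathrm{im}B$, via \cite[Corollary 2.4]{saberi-lin-stoorvogel}.
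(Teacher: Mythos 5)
Your reduction to disagreement coordinates is sound and in fact reproduces the paper's decomposition exactly: your $\bar{\tilde{x}}_i=\tilde{x}_i-\tilde{x}_N$ with $\tilde{x}_i=\bar{\zeta}_i-\hat{x}_i$ coincides with the paper's $\bar{e}=(\bar{L}\otimes I)\bar{x}-\tilde{x}$, and your cascade for $(\bar{x},e,\bar{\tilde{x}})$ is identical to \eqref{newsystem3} (note $\Pi L=\bar{L}\Pi$). Stability, the two channels already present in Theorem \ref{mainthm2}, and the bookkeeping of the powers of $\rho$ are all handled correctly. The problem is that you correctly isolate the crux --- showing the observer-error channel contributes only $O(\rho^{-1})$ after the two explicit factors of $\rho$ are paid for --- and then do not prove it. Your proposed mechanism, comparing \eqref{arespecial3} with the controllability-Gramian Lyapunov equation to get $\trace(\text{Gramian})\leq\trace(Q_\rho)$ and then asserting that ``$Q_\rho$ shrinks fast enough,'' is a genuine gap: it would require a quantitative statement about how $\|Q_\rho\|$ scales with $\rho$ (and with the $\rho$-dependent choice of $\delta$), which you never establish, and it also targets the wrong norm, since in the final assembly this channel enters through an $H_\infty$ factor multiplying the single $H_2$ factor $\|T_3^{-1}(I\otimes B)\|_{H_2}$.

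The paper closes this gap differently and without any claim that $Q_\rho$ is small. It takes the storage function $V_0=\bar{e}\T(I\otimes Q_\rho^{-1})\bar{e}$ and uses \eqref{arespecial3} (pre- and post-multiplied by $Q_\rho^{-1}$) together with a completion of squares on the cross term $2\bar{e}\T(\bar{L}\Pi\otimes Q_\rho^{-1}E)\omega$ to obtain the dissipation inequality $\dot{V}_0\leq-\rho^2\|\bar{e}\|^2+\|\bar{L}\|_2^2\|\Pi\|_2^2\|\omega\|^2$; integrating from zero initial conditions gives the explicit $L_2$-gain bound $\|T_{\omega\bar{e}}\|_{H_\infty}\leq\|\bar{L}\|_2\|\Pi\|_2/\rho$, valid for every $\delta$ for which $Q_\rho>0$ exists. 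This is precisely the $\rho^2Q_\rho^2$ term at work, but as a damping term in the dissipation inequality rather than as a force shrinking $Q_\rho$. You should replace your Gramian argument with this estimate (or an equivalent one); everything else in your outline then assembles as in the paper. Two minor points: for the necessity of conditions (c) and (e) in the partial-state case the paper invokes \cite[Theorem 2.3]{saberi-lin-stoorvogel} (the measurement-feedback result), not Corollary 2.4, which is the state-feedback version used in Theorem \ref{mainthm2}; and your claim that detectability of $(C,A)$ is ``forced by the output-only nature of the measurements'' should be argued (as for stabilizability) from the requirement that the closed loop be internally stabilizable from the relative output information alone.
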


\begin{proof}[Proof of Theorem \ref{mainthm4}]
  Similar to Theorem \ref{mainthm2} and by defining
  $\tilde{x}_i=\hat{x}_i-\hat{x}_N$, we have
  \[
  \begin{system}{cll}
  \dot{\bar{x}}_i&=&A{\bar{x}}_i+B(u_i-u_N)+E(\omega_i-\omega_N)\\
  \dot{\tilde{x}}_i &=&
  A\tilde{x}_i-\rho BB\T P\sum_{j=1}^{N-1}\bar{\ell}_{ij}\bar{\chi}_j+\delta^{-2}Q_{\rho}C\T
  C(\sum_{j=1}^{N-1}\bar{\ell}_{ij}\bar{x}_j-\tilde{x}_i) \\ 
  \dot{\bar{\chi}}_i &=&
  A\bar{\chi}_i+B(u_i-u_N)+\rho\tilde{x}_i-\rho\sum_{j=1}^{N-1}\bar{\ell}_{ij}\bar{\chi}_j  
  \end{system}
  \]
  We define
  \[
  \bar{x}=\begin{pmatrix}
  \bar{x}_1\\\vdots\\\bar{x}_{N-1}
  \end{pmatrix},\;\;
  \tilde{x}=\begin{pmatrix}
  \tilde{x}_1\\\vdots\\\tilde{x}_{N-1}
  \end{pmatrix},\;\; 
  \bar{\chi}=\begin{pmatrix}
  \bar{\chi}_1\\\vdots\\\bar{\chi}_{N-1}
  \end{pmatrix}, 
  \omega=\begin{pmatrix}
  \omega_1\\\vdots\\\omega_N
  \end{pmatrix}
  \]
  then we have the following closed-loop system
  \[
  \begin{system*}{l}
  \dot{\bar{x}}=(I\otimes A) \bar{x}-\rho[I\otimes BB\T
  P]\bar{\chi}+(\Pi \otimes E)\omega\\ 
  \dot{\tilde{x}} = [I\otimes (A-\frac{1}{\delta^2}Q_{\rho}C\T
  C)]\tilde{x}-\rho(\bar{L}\otimes BB\T P)\bar{\chi} \\
  \hspace*{5cm}+\frac{1}{\delta^2}(\bar{L}\otimes Q_{\rho}C\T C)\bar{x} \\
  \dot{\bar{\chi}} = (I\otimes A-\rho\bar{L}\otimes
  I)\bar{\chi}-\rho(I\otimes BB\T P) \bar{\chi}+\rho\tilde{x} 
  \end{system*}
  \]	
  By defining $e=\bar{x}-\bar{\chi}$ and
  $\bar{e}=(\bar{L}\otimes I)\bar{x}-\tilde{x}$, we can obtain
  \begin{equation}\label{newsystem3}
  \begin{system*}{l}
  \dot{\bar{x}}=[I\otimes (A-\rho BB\T P)] \bar{x}+\rho
  (I\otimes BB\T P)e+(\Pi \otimes E)\omega\\ 
  \dot{\bar{e}}=[I\otimes (A-\delta^{-2}Q_{\rho}C\T
  C)]\bar{e}+(\bar{L} \Pi \otimes E)\omega\\ 
  \dot{e}=(I\otimes A-\rho\bar{L}\otimes I)e+\rho\bar{e}+(\Pi \otimes E)\omega
  \end{system*}
  \end{equation}	
  From Theorem \ref{mainthm2} and \eqref{arespecial3}, we have $A-\rho BB\T P$, $I\otimes A-\rho\bar{L}\otimes I$ and $A-\delta^{-2}Q_{\rho}C\T
  C$ are stable for $\rho\geq 1$ and some a $\delta>0$. It means that system \eqref{newsystem3} is asymptotically stable without disturbance $\omega$. So,   
  we have the protocol
  \eqref{pscp3} can achieve the state synchronization. Meanwhile, we can obtain the following result by choosing the Lyapunov function  
  \[
  V_0=\bar{e}\T (I\otimes Q_\rho^{-1})\bar{e}
  \]
  for $\bar{e}$ with $Q_\rho$ satisfying \eqref{arespecial3}. Then we have
  \begin{align*}
  \dot{V}_0=&\bar{e}\T [I\otimes (Q_\rho^{-1}A+A\T
  Q_\rho^{-1}-2\delta^{-2}C\T C)]\bar{e}\\ 
  &\qquad+2\bar{e}(\bar{L}\Pi \otimes Q_\rho^{-1}E) \omega\\
  %    =&-\bar{e}\T[I\otimes (\rho^2 I+\delta^{-2}C\T C+Q_\rho^{-1}EE\T
  %    Q_\rho^{-1})]\bar{e}\\ 
  %    &\qquad+\bar{e}(\bar{L}\Pi \otimes Q_\rho^{-1}E)\omega\\
  \leq&-\bar{e}\T [I\otimes (\rho^2 I+Q_\rho^{-1}EE\T Q_\rho^{-1})]\bar{e}\\
  &\qquad+\bar{e}\T (I\otimes Q_\rho^{-1}EE\T
  Q_\rho^{-1})\bar{e}+\omega\T (\Pi\T\bar{L}\T\bar{L}\Pi \otimes
  I)\omega\\ 
  \leq& -\rho^2 \|\bar{e}\|_{L_2}^2+\|\bar{L}\|_2^2\|\Pi\|_2^2\|\omega\|_{L_2}^2
  \end{align*}
  with $\rho\geq1$. By integrating the above inequality, we have
  \[
  \int_0^\infty -\rho^2 \| \bar{e}(t) \|_{L_2}^2 +
  \|\bar{L}\|_2^2\|\Pi\|_2^2\| \omega(t) \|_{L_2}^2 \text{d}t \geq 0
  \]
  for zero initial conditions and hence
  \[
  \rho^2 \| \bar{e} \|_{L_2}^2 \leq \|\bar{L}\|^2\|\Pi\|^2\| \omega  \|_{L_2}^2
  \]
  i.e. 
  \begin{equation}\label{tweinf}
  \| T_{\omega
  	\bar{e}}\|_{H_\infty}=\frac{\|\bar{e}\|_{L_2}}{\|\omega\|_{L_2}}\leq
  \frac{\|\bar{L}\|_2\|\Pi\|_2}{\rho}	 
  \end{equation}
  for $\rho\geq1$.

  Then we need to
  prove the $H_2$ norm from $\omega$ to $\bar{x}$ (or $x_i-x_j$) can
  be made arbitrary small.	
  From \eqref{newsystem3}, we have
  \begin{align*}
    T_{\omega \bar{x}}(s)&=\begin{pmatrix}
      I&0&0
    \end{pmatrix}\begin{pmatrix}
      T_3&0&-\rho T_4\\
      0&T_1&0\\
      0&-\rho I&T_2
    \end{pmatrix}^{-1}\begin{pmatrix}
      \Pi\otimes E \\
      \bar{L}\Pi\otimes E \\
      \Pi\otimes E
    \end{pmatrix}\\
    &=T_3^{-1}(I\otimes B)T_0
  \end{align*}
  where $T_1=sI-I\otimes (A-\delta^{-2}Q_{\rho}C\T C)$, $T_2,T_3$ are as same as the definition in Theorem \ref{mainthm2}, $T_4=I\otimes BB\T P$, 
  \begin{multline*}
    T_0=(\Pi\otimes X)+\rho^2 (I\otimes B\T P)T_2^{-1} T_{\omega
      \bar{e}}\\+\rho (I\otimes B\T P)T_2^{-1}(\Pi\otimes BX),  
  \end{multline*}
  and $T_{\omega \bar{e}}=T_1^{-1}[(\bar{L}\Pi)\otimes E]$.
	
  Similar to the proof of Theorem \ref{mainthm2}, we have
  \begin{align*}
    \|T_{\omega \bar{x}}\|_{H_2}\leq& \|T_3^{-1}[\Pi\otimes (BX)]\|_{H_2}\\
    &+\|T_3^{-1}(I\otimes B)\rho^2 (I\otimes B\T P)T_2^{-1}
    T_{\omega \bar{e}}\|_{H_2}\\ 
    &+\|T_3^{-1}(I\otimes B)\rho [I\otimes (B\T
    P)]T_2^{-1}(\Pi\otimes BX)\|_{H_2} \\
    \leq& \|T_3^{-1}[\Pi\otimes (BX)]\|_{H_2}\\
    &+\|T_3^{-1}(I\otimes B)\|_{H_2}\|\rho^2 (I\otimes B\T P)T_2^{-1}
    T_{\omega \bar{e}}\|_{H_\infty}\\ 
    &+\|T_3^{-1}(I\otimes B)\|_{H_2}\|\rho [I\otimes (B\T
    P)]T_2^{-1}(\Pi\otimes BX)\|_{H_\infty}
  \end{align*}
  because $T_3^{-1}(I\otimes B)$ is a strictly proper transfer matrix.
  Then, we have
  \begin{align*}
    \|T_{\omega \bar{x}}\|_{H_2}\leq& \|T_3^{-1}[\Pi\otimes (BX)]\|_{H_2}\\
    &+\rho^2 \|T_3^{-1}(I\otimes B)\|_{H_2}\| B\T P\|_2\|T_2^{-1}\|_{H_\infty}\|
    T_{\omega \bar{e}}\|_{H_\infty}\\ 
    &+\rho\|T_3^{-1}(I\otimes B)\|_{H_2}\| B\T P\|_2\|T_2^{-1}\|_{H_\infty}\|\Pi\|_2\|BX\|_2.
  \end{align*}
  From Theorem \ref{mainthm2}, we have 
  \begin{align*}
    \|T_3^{-1}[\Pi\otimes BX]
    \|_{H_2} &<\rho^{-1}\|\Pi\|_2^2\|X\|_2^2\|P^{-1}\|_2\quad  \text{ and }\\
    \|T_3^{-1}(I\otimes B)\|_{H_2} &<\rho^{-1}\|P^{-1}\|_2
  \end{align*}
  for $\rho\geq1$. Meanwhile, according to the results of Theorem
  \ref{mainthm2}, we have
  \[
    \|T_2^{-1}\|_{H_\infty}\leq \rho^{-1} W_2
  \]
  for $\rho\geq1$. Thus, for $\rho\geq1$, we can obtain
  \[
    \|T_{\omega (x_i-x_j)}\|_{H_2}<\rho^{-1}\mathbf{W}
  \]
  with 
  \begin{multline*}
    \mathbf{W}=\|P^{-1}\|_2\|\Pi\|_2\left(\|\Pi\|_2\|X\|_2^2\right.\\\left.
      +W_2\| B\T P\|_2\|\bar{L}\|_2+W_2\| B\T P\|_2\|BX\|_2\right).
  \end{multline*}
  for $\rho\geq1$, i.e.
  \begin{equation} \tag*{\rule{6pt}{6pt}}
    \|T_{\omega (x_i-x_j)}\|_{H_2}< \rho^{-1}\mathbf{W}.
  \end{equation}
  
  As the next step, we will prove the necessity. Similar to the proof
  of Theorem \ref{mainthm2}, we have that (a), (b) and (d) are
  necessary conditions. From the result on $H_2$ almost
  disturbance decoupling in \cite[Theorem 2.3]{saberi-lin-stoorvogel},
  we find that (c) and (e) are also necessary conditions in case of
  partial-state coupling.
\end{proof}

\begin{remark}
	We would like to emphasize that Protocol $1$ and $2$ do not need any information about
	the number of agents and Laplacian matrix $L$ associated with the
	communication graph. On the other hand, the parameter $\rho$ in our
	protocol design is used to reduce the impact of disturbance on synchronization error.  However, when we need a better disturbance
	rejection level (smaller $H_2$ norm), then we need
	to increase $\rho$. It worth to mention that in the absence of disturbance we can choose $\rho$
	arbitrarily.
\end{remark}

\begin{remark}
	Note that Protocol $1$ and $2$ can generate
	a signal which is part of their state and communicate with their neighbors over the same communication graph that leads to achieving
	scalable protocols. Meanwhile, these protocols are universal since
	they can work for any communication graph with any number of agents
	as long as the graph contains a directed spanning tree. Meanwhile
	the design methodology is scalable since it is one shot design based
	on an explicit linear structure.
\end{remark}

%\begin{remark}
%  From the classical results on almost disturbance decoupling, it
%  easily can be shown that the condition \eqref{eqtc},
%  i,e. $\mathrm{im}E\subseteq\mathrm{im}B$ required in Theorems
%  \ref{mainthm1} and \ref{mainthm3} is also a necessary condition.
%\end{remark}

\begin{figure}[t]
	\includegraphics[width=9cm, height=10cm]{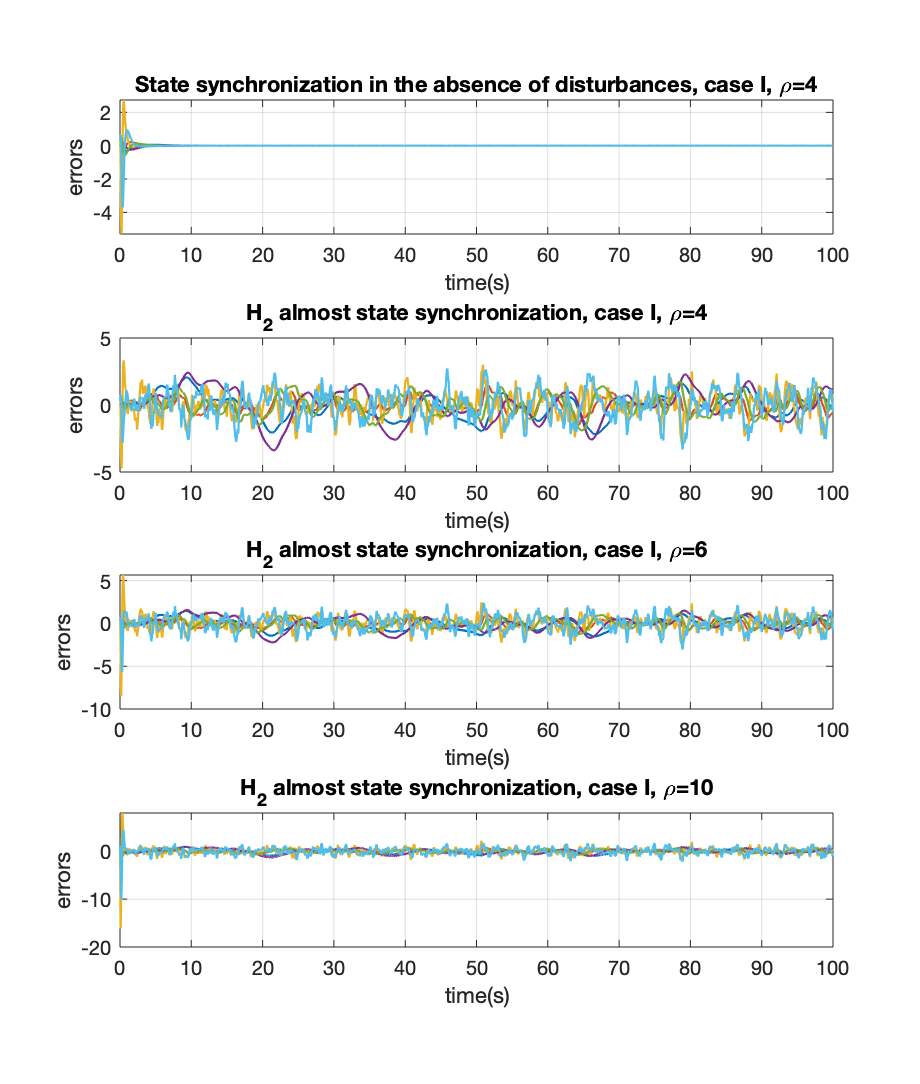}
	\centering
	\vspace*{-10mm}
	\caption{Results of state synchronization and $H_2$ almost state synchronization for the MAS with $N=3$}\label{CaseI}
\end{figure}
\begin{figure}[t]
	\includegraphics[width=9cm, height=10cm]{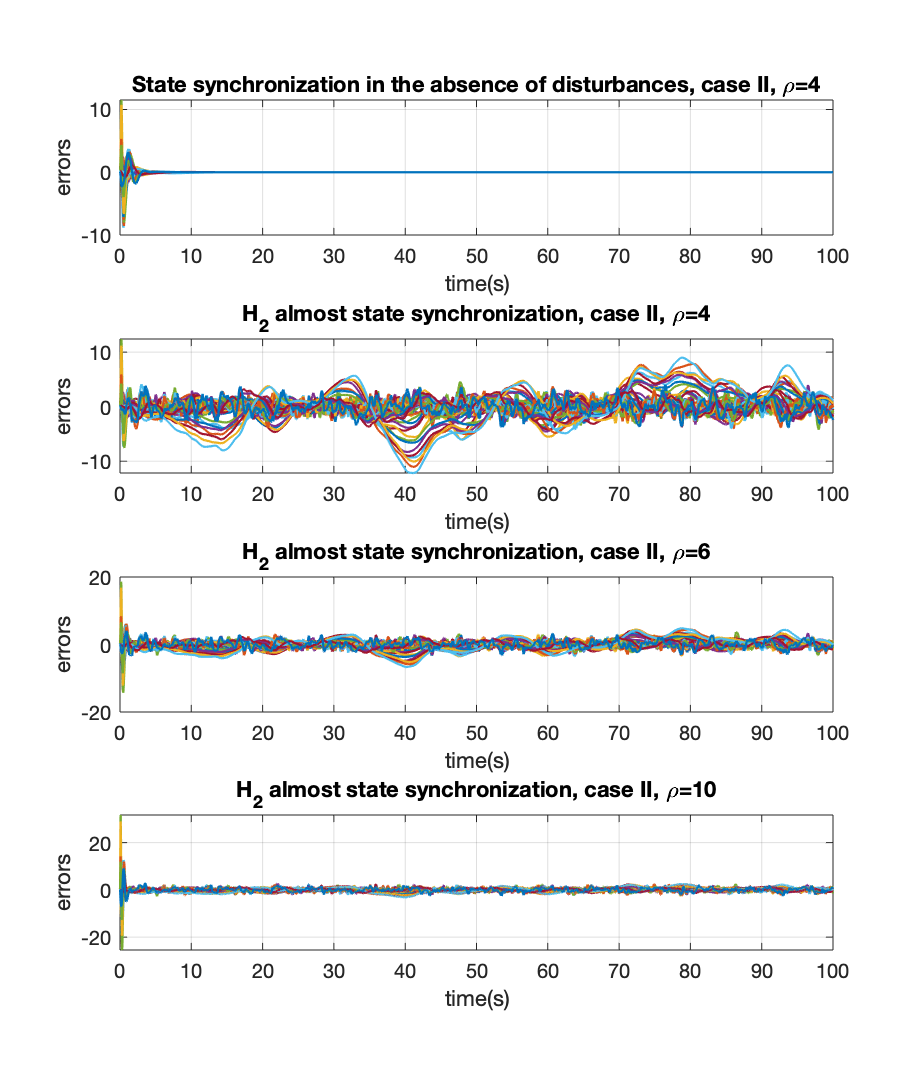}
	\centering
	\vspace*{-10mm}
	\caption{Results of state synchronization and $H_2$ almost state synchronization for the MAS with $N=20$}\label{CaseII}
\end{figure}

%\begin{figure}[t]
%	\includegraphics[width=9cm, height=4cm]{State_3Nodes_Rho4}
%	\centering
%	\vspace*{-3mm}
%	\caption{State synchronization in the absence of
%		disturbance and with the choice of $\rho=4$ for the MAS with $N=3$}\label{State_3Nodes_Rho4}
%\end{figure}	
%\begin{figure}[t]
%	\includegraphics[width=9cm, height=4cm]{Almost_3Nodes_Rho4}
%	\centering
%	\vspace*{-3mm}
%	\caption{$H_2$ almost state synchronization with the choice of $\rho=4$  for the MAS with $N=3$ }\label{Almost_3Nodes_Rho4}
%\end{figure}
%\begin{figure}[t]
%	\includegraphics[width=9cm, height=4cm]{Almost_3Nodes_Rho10}
%	\centering
%	\vspace*{-3mm}
%	\caption{$H_2$ almost state synchronization with the choice of $\rho=10$  for the MAS with $N=3$}\label{Almost_3Nodes_Rho10}
%\end{figure}
%\begin{figure}[t]
%	\includegraphics[width=9cm, height=4cm]{State_20Nodes_Rho4}
%	\centering
%	\vspace*{-3mm}
%	\caption{State synchronization in the absence of
%		disturbance with the choice of $\rho=4$ for the MAS with $N=20$}\label{State_20Nodes_Rho4}
%\end{figure}	
%\begin{figure}[t]
%	\includegraphics[width=9cm, height=4cm]{Almost_20Nodes_Rho4}
%	\centering
%	\vspace*{-3mm}
%	\caption{$H_2$ almost state synchronization with the choice of $\rho=4$  for the MAS with $N=20$ }\label{Almost_20Nodes_Rho4}
%\end{figure}
%\begin{figure}[t]
%	\includegraphics[width=9cm, height=4cm]{Almost_20Nodes_Rho10}
%	\centering
%	\vspace*{-3mm}
%	\caption{$H_2$ almost state synchronization with the choice of $\rho=10$  for the MAS with $N=20$}\label{Almost_20Nodes_Rho10}
%\end{figure}
\section{Numerical example}

In this section we will illustrate the effectiveness of our protocol
design with a numerical example for $H_2$ state synchronization
with partial-state coupling. Consider agent models \eqref{homst-agent-model} with
\[
A=\begin{pmatrix}
0&1&0\\0&0&1\\0&0&0
\end{pmatrix}, \quad B=\begin{pmatrix}
0\\0\\1
\end{pmatrix}, \quad
C=\begin{pmatrix}
1&0&0
\end{pmatrix}, \quad E=B.
\]
For this agent model, we obtain Protocol $2$, by solving algebraic Riccati equations \eqref{arespecial} and \eqref{arespecial3} for three
values of $\rho=4$, $\rho=6$ and $\rho=10$ (The Riccati equation \eqref{arespecial3} is solved with $\delta= 0.0004$). 

We create two homogeneous MAS with different number of
agents and different communication topologies to show the
designed protocol is scale-free, i.e. it is independent of information of the  communication network and the number of agents $N$.
\begin{itemize}
	\item \textit{Case I}: In this case, we consider MAS with $3$ agents and communication topology $\mathcal{A}_1$, with $a_{21}=a_{32}=1$. The result of exact state synchronization in the absence of disturbance and $H_2$ almost state synchronization in presence of white noises with unit power spectral densities for $i=1,\cdots, N$ are shown in Figure \ref{CaseI}. The results show that by increasing $\rho$, one can decrease the impact of disturbances on synchronization error.
	
	\item \textit{Case II}: Next, we consider a MAS with $20$ agents and associated adjacency matrix $\mathcal{A}_2$, with $a_{16}=a_{21}=a_{32}=a_{43}=a_{54}=a_{65}=a_{76}=a_{87}=a_{98}=a_{10,9}=a_{11,10}=a_{12,11}=a_{13,12}=a_{13,20}=a_{14,13}=a_{15,14}=a_{15,6}=a_{16,15}=a_{17,16}=a_{18,17}=a_{19,18}=a_{20,18}=1$. Figure \ref{CaseII} shows the results for exact state synchronization in the
	absence of disturbance with $\rho=4$ and $H_2$ almost state
	synchronization results in presence of disturbances $\omega_i$ equal to white noises with unit power spectral densities for $i=1,\cdots, N$ with $\rho=4$, $\rho=6$, and $\rho=10$.
\end{itemize}
The simulation results show that the protocol design is independent of
the communication graph and is scale free so that we can achieve
$H_2$ almost state synchronization with one-shot protocol design, for any graph with any number of
agents. The simulation results also show that by increasing the value
of $\rho$, almost state synchronization is achieved with higher degree of
accuracy.

\section{Conclusion}
In this paper, we studied $H_2$ almost state synchronization
of homogeneous networks of non-introspective agents. A parameterized scale-free linear dynamic protocol, parameterized in scalar $\rho$, was developed using localized information exchange over the same communication network and solely based on agent models. In particular, in the absence of disturbance, we achieved synchronization for any $\rho>1$ and in the presence of disturbance we achieved almost state synchronization for a given arbitrary degree of accuracy by choosing $\rho$ sufficiently large. Despite all the existing results, our design methodology was scale-free so that we did not need any information about the communication network such as bounds on the associated Laplacian matrix and the number of agents. As our future work, we aim to extend the scale-free designs proposed in this paper to the broader classes of agent models.

\bibliographystyle{plain}
\bibliography{referenc}
\end{document}